\theoremstyle{plain}
\newtheorem{thm}{Theorem}
\newtheorem{prop}{Proposition}
\theoremstyle{definition}
\newtheorem*{rem*}{Remark}
\DeclareMathOperator*{\argmax}{arg\,max} 
\newcommand{\floor}[1]{\left \lfloor #1 \right \rfloor}
\newcommand{\ceil}[1]{\left \lceil #1 \right \rceil}
\newcommand{\card}[1]{| #1 |}
\newcommand{\ac}{\mathcal{A}} 
\newcommand{\p}{\mathcal{I}} 
\newcommand{\W}{\mathrm{W}} 
\newcommand{\emp}{a^{\varnothing}}  
\newcommand{\aopt}{a^{\mathrm{opt}}}  
\newcommand{\pathf}{p_{i \to i+1}}  
\newcommand{\gr}{\mathcal{G}} 
\newcommand{\vr}{\mathcal{V}} 
\newcommand{\ed}{\mathcal{E}} 
\newcommand{\nd}{\mathcal{N}} 
\newcommand{\gi}{\tilde{a}} 
\newcommand{\gt}{T(\gr, \pi)} 
\newcommand{\gtw}{T_{\max}(\gr)} 
\newcommand{\gtb}{T_{\min}(\gr)} 
\newcommand{\talg}{T_{\mathrm{alg}}(\gr, v_{\mathrm{seed}})} 
\newcommand{\T}{\mathcal{T}} 
\newcommand{\pib}{\pi_{\mathrm{best}}} 
\newcommand{\piw}{\pi_{\mathrm{worst}}} 
\begin{document}
\title{Execution Order Matters in Greedy Algorithms with Limited Information}

\author{Rohit Konda, David Grimsman, Jason R. Marden \thanks{R. Konda and J. R. Marden are with the Department of Electrical and Computer Engineering at the University of California, Santa Barbara, CA and D. Grimsman is with the Department of Computer Science at Brigham Young University. This work is supported by \texttt{AFOSR Grant \#FA9550-20-1-0054} and \texttt{AFOSR Grant \#FA9550-21-1-0203}.}}

\maketitle
\thispagestyle{empty}

\begin{abstract}
In this work, we study the multi-agent decision problem where agents try to coordinate to optimize a given system-level objective. While solving for the global optimum is intractable in many cases, the \emph{greedy algorithm} is a well-studied and efficient way to provide good approximate solutions - notably for submodular optimization problems. Executing the greedy algorithm requires the agents to be ordered and execute a local optimization based on the solutions of the previous agents. However, in limited information settings, passing the solution from the previous agents may be nontrivial, as some agents may not be able to directly communicate with each other. Thus the communication time required to execute the greedy algorithm is closely tied to the order that the agents are given. In this work, we characterize interplay between the communication complexity and agent orderings by showing that the complexity using the best ordering is $O(n)$ and increases considerably to $O(n^2)$ when using the worst ordering. Motivated by this, we also propose an algorithm that can find an ordering and execute the greedy algorithm quickly, in a distributed fashion. We also show that such an execution of the greedy algorithm is advantageous over current methods for distributed submodular maximization.
\end{abstract}

\section{Introduction}
\label{sec:intro}

Many real-world problems are well-modeled as multiagent decision problems, including building energy management \cite{Zhao2013}, stock trading \cite{Luo2002}, water resource allocation \cite{Kahil2016,Madani2010}, traffic light management \cite{Cruz-Piris2018}, the power grid \cite{Bai2006}, and robot path planning \cite{SinghPal2013}. In these scenarios, the set of $n$ decision makers, or agents, coordinate to a joint decision that maximizes some objective function.

In general, finding the optimal decision set is computationally intractable, even for a centralized authority. Therefore, there exist a multitude of techniques for arriving at a joint decision, which may be an approximation of the optimal. For instance, consensus algorithms offer a way for agents to converge as a group toward a unified decision \cite{Nguyen2018,Kar2010,Ongaro2019}. In other settings, a game-theoretic approach is advantageous, where agents arrive at a joint decision which is some form of equilibrium (e.g., Nash equilibrium \cite{nash1950equilibrium}, Wardrop equilibrium \cite{wardrop1952road}, Stackelberg equilibrium \cite{von2010market}, etc.). Of course even finding such equilibria can be challenging \cite{gilboa1989nash}, but there are subclasses of problems where this can be done efficiently \cite{monderer1996potential}.

Another common approach to multiagent decision problems is a greedy algorithm \cite{Marzouki2017,Gharesifard2016a}. A common theme among greedy algorithms is that at each iteration of the algorithm, a myopic choice is made: simply pick the best immediate option, ignoring the effect on future iterations of the algorithm. As with the other algorithms mentioned above, greedy algorithms in general are not always guaranteed to find an optimal solution to a given problem, however, they are often easy to implement, execute quickly, and in some cases provide some degree of optimality.

This work focuses on the scenario where a greedy algorithm is used to solve a multiagent decision problem. In this setting, a greedy algorithm is implemented by first ordering the agents. Then each agent sequentially makes its decision by choosing the action that maximizes the objective function, based solely on the decisions of previous agents in the sequence. An underlying element of the greedy algorithm is that the agents are able to coordinate with each other via some network. In the best case, such a network would allow for each agent to communicate with all other agents directly. In many applications, however, this is not realistic; communication between agents $i$ and $j$ must pass through other agents in the network. If $i$ and $j$ are on opposite ends of the network, or if the network has highly-limited bandwidth, this communication may be delayed. In light of this, two questions arise:
\begin{enumerate}
    \item Given the structure of the communication network, how does the ordering of the agents affect the time it takes to complete the greedy algorithm?
    \item Can the agents coordinate among themselves to find the ordering that will cause the greedy algorithm to complete as fast as possible?
\end{enumerate}
We address the first question by showing, given a network structure, that the greedy algorithm finishes in $O(n^2)$ time steps for the worst ordering and $O(n)$ time steps for the best ordering. We then address the second question by presenting a fully-distributed algorithm whereby agents can find a near-optimal ordering while simultaneously runnning the greedy algorithm. 

Of particular import in this work are submodular maximization problems, which are prevalent in modeling many applications, such as sensor placement~\cite{Krause2008}, data summarization~\cite{Badanidiyuru2014,Lin2011}, robot path planning~\cite{Singh2007,Corah2019}, task allocation~\cite{Arslan2007}, inferring influence in a social network~\cite{Gomez-Rodriguez2012}, image segmentation~\cite{Kim2011a}, outbreak detection in networks~\cite{Leskovec2007}, and leader selection in multiagent systems~\cite{Clark2011}. A key feature that is shared among the objective functions in these various domains is a property of \emph{diminishing returns}. For example, in outbreak detection, the added benefit of placing an outbreak sensor on a node in a network is valuable when there are few other sensors in the network, and less valuable when there are already many other sensors present. Objectives that exhibit such properties are \emph{submodular}.

While such problems are NP-Hard in general, the property of submodularity can be exploited to show that certain algorithms can achieve near-optimal results. The seminal work in \cite{nemhauser1978analysis} shows that a centralized greedy algorithm can, in fact, provide a solution that is guaranteed to be within $1/2$ of the optimal solution. More sophisticated algorithms have pushed this guarantee from $1/2$ to $1 - 1/e \approx 0.63$~\cite{Calinescu2011,Filmus2012a}. Progress beyond this approximation frontier is not possible for polynomial time algorithms as it was also shown that no such algorithm can achieve higher guarantees than $1 - 1/e$, unless $P=NP$~\cite{Feige1998}.

Recently, work has emerged wherein submodular maximization problems are modeled as multiagent decision problems \cite{mirzasoleiman2016,Robey2019a,rezazadeh2021distributed,du2020jacobi}. In Section~\ref{sec:AA}, we will show how our version of the greedy algorithm applies to this setting \footnote{It should be noted that the work in \cite{grimsman2017impact} also explores the idea of performing the standard greedy algorithm on a network for a submodular maximization problem. It was shown that as links are removed from the network, the performance guarantee decreases. The setting of our current paper differs in that we allow $k$-hop communication, which is why the $1/2$ guarantee is maintained. It also differs in that the work in \cite{grimsman2017impact} assumes an implicit ordering of the agents, whereas in this paper the agents must coordinate to find an ordering.}, and how this algorithms compares to existing techniques in terms of runtime and performance guarantees. It will be shown that the greedy algorithm will complete in fewer time steps than existing methods, while still maintaining 1/2-optimality in the resulting decision set.

In Section \ref{sec:model}, we introduce our model. In Section \ref{sec:convergence}, we present our main results on communication time guarantees versus different orderings. We empirically verify our theoretical results in Section \ref{sec:sim} and discuss the implications in submodular maximization problems in Section \ref{sec:AA}. We conclude in Section \ref{sec:conc}. The relevant code is found at \cite{konda2021}. We sincerely thank Gilberto Diaz-Garcia for the helpful discussions.

\section{Model}
\label{sec:model}

Consider a distributed optimization problem with $n$ agents $\p=\{1, \dots, n\}$, where each agent is endowed with a decision or action set $\ac_i$. We denote an action as $a_i \in \ac_i$, and a joint action profile as $a \in \ac = \ac_1 \times \cdots \ac_n$. We assume that each agent $i$ has the ability to ``opt out" of participating in the decision process. This is modeled by having an action $\emp_i \in \ac_i$, so that when agent $i$ chooses action $\emp_i$, agent $i$ is opting out. The quality of each joint action profile is evaluated with a global objective function $\W(a): \ac \to \mathbb{R}_{\geq 0}$ that a system designer seeks to maximize. In other words, the goal of the system designer is to coordinate the agents to a joint action profile that satisfies 

\begin{equation} \label{eq:opt}
    \aopt \in \argmax_{a \in \ac} \W(a).
\end{equation}

In general, solving the multi-agent decision problem in Eq. $\eqref{eq:opt}$ is infeasible, due to computational, informational, communication constraints etc. Therefore, fast, distributed algorithms are employed to compute good approximate solutions. The \emph{greedy algorithm} has cemented its place as a universal approach to arrive at approximate solutions in many application domains. In this algorithm, the set of agents is ordered (for instance, according to its index $i$) and then each agent sequentially solves the reduced optimization problem
\begin{equation} \label{eq:choice_gre}
    \gi_i \in \argmax_{a_i \in \ac_i} \W(\gi_1, \dots, \gi_{i-1}, a_i, \emp_{i+1}, \dots, \emp_n), 
\end{equation}
where each agent $i$ chooses the best action $\gi_i$ given that the previous agents in the sequence have also played their best action and the successive agents in the sequence have opted out. After each agent chooses according to Eq. \eqref{eq:choice_gre}, then the algorithm is complete and the resulting set of decisions $(\gi_1, \dots, \gi_n)$ comprises the joint decision set $\gi$. The process completes in $n$ time steps, where a time step is comprised of an agent making a decision and communicating that decision to future agents in the sequence.

However, the greedy algorithm makes a key assumption that agents have access to the decisions of the previous agents. In purely distributed systems, this assumption may be infeasible. There have been prior works that study the performance of the greedy algorithm with relaxed informational assumptions, in that agent $i$ only knows the decisions of some strict subset of the previous agents $S \subset \{1, \dots, i-1\}$ \cite{grimsman2017impact, Gharesifard2016a}. However, this work takes a different approach, where we assume that agents can make up for their informational deficiencies through a communication infrastructure. We model the communication constraints through an underlying graph structure $\gr = (\vr, \ed)$, where each vertex in $\vr$ corresponds to an agent in $\p$ and each edge $(i, j) \in \ed = \vr \times \vr$ implies that agents $i$ and $j$ can communicate with one another. The graph $\gr$ is assumed to be connected and undirected throughout this paper, unless explicitly stated. The set of agents that agent $i$ can communicate with is agent $i$'s neighborhood $\nd_i$. 

The primary focus of this work is to examine the interplay between the communication graph $\gr$ and the order in which the greedy algorithm in Eq. \eqref{eq:choice_gre} is solved under. For a given graph $\gr$, the order $\pi: \vr \to \p$ is defined by which label $i$ given to each vertex $v$. Therefore, given $\gr$, we would like the characterize the communication time guarantees of the worst order and the best order. To analyze the spectrum of possible guarantees with respect to different ordering methods, we define the following two quantities
\begin{align}
    \gtb &= \min_{\pi} \ \gt, \label{eq:btord} \\
    \gtw &= \max_{\pi} \ \gt, \label{eq:wtord}
\end{align}
where $\gt$ refers to the time it takes for the communication process to finish for a given graph $\gr$ and ordering $\pi$. We will use $\pib$ and $\piw$ to refer to the orderings that are the solutions of Eq. \eqref{eq:btord} and Eq. \eqref{eq:wtord} respectively. We remark that only in the full information setting, where $\gr_c$ is the complete graph, is the run-time for any order the same, with $T_{\max}(\gr_c) = T_{\min}(\gr_c) = n-1$.

We can describe the $k$-hop communication, in which an agent $i$'s greedy action $\gi_i$ is passed along to agents outside of its neighborhood $\nd_i$, using the following graph-theoretic notation. A \emph{walk} on the graph $\gr$ is a sequence of vertices $\gamma = (v_1, \dots, v_m)$, in which each successive pair $(v_j, v_{j+1}) \in \ed$ for all $1 \leq j < m$. We denote the length of the walk as $\card{\gamma}$ being the number of vertices in the sequence. A \emph{spanning walk} is a walk in which all vertices in the graph are visited and a \emph{minimum spanning walk} is a spanning walk with shortest length. A \emph{path} $p$ is a walk in which all the vertices $\{v_j\}_{j \leq m}$ are all distinct. The expression of $\gt$ is given as
\begin{equation}
    \label{eq:gtexp}
    \gt = \sum_{i = 1}^{n-1} \Big( \min_{\pathf} \card{\pathf} - 1 \Big),
\end{equation}
where $\pathf$ is a path on the graph from the vertex (labeled with) $i$ to the vertex $i+1$. This expression is motivated by a natural communication process, where initially agent $1$ computes its greedy action $\gi_1$ at time $0$. Then agent $1$ communicates $\gi_1$ to agent $2$ through a $k$-hop walk through the graph, where each hop is assumed to take $1$ time step. Then agent $2$ computes $\gi_2$ given $\gi_1$ and passes both actions to agent $3$ through another $k$-hop walk. Continuing this process, agent $n-1$ passes $\{\gi_j\}_{j < n}$ to agent $n$ and agent $n$ computes $\gi_n$ finishing the process. To isolate the run-time analysis with respect to only the communication time, we also assume that agents can solve for their greedy action $\gi_i$ arbitrarily fast.

\section{Main Results on Communication Time}
\label{sec:convergence}

\subsection{Motivating Example}
To make the communication process concrete, consider when the given communication graph is a line graph as shown in Figure \ref{fig:simcom}. In this graph scenario, agent $1$ initially computes its greedy response $\gi_1$ and passes the action it has played to agent $2$ at $t=0$. Then at $t=1$, agent $2$ (knowing $\gi_1$) can compute $\gi_2$ and passes both $\gi_1$ and $\gi_2$ along to agent $3$. Continuing this to $t=5$, agent $6$ will have been passed the greedy actions of all previous agents $1$ through $5$ and play its greedy action $\gi_6$, completing the greedy algorithm in Eq. \eqref{eq:choice_gre}. This will complete in $\gt=5$ time steps, which is the best that one can hope for when implementing the greedy algorithm in a limited information setting.

\begin{figure}[ht]
    \centering
    \includegraphics[width=\columnwidth]{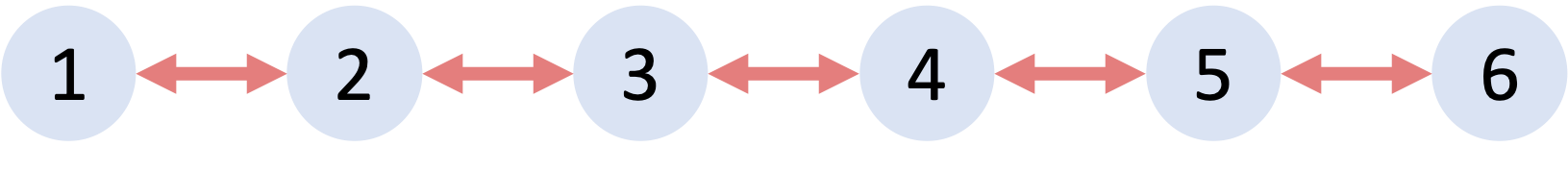}
    \caption{Example of a line graph, where  we have labeled the vertices according to the best ordering $\pib$. In this example, the agents compute their greedy action and pass it down the line.}
    \label{fig:simcom}
\end{figure}

However, consider the following order in Figure \ref{fig:simcom2}. This situation can occur if the order $\pi$ is improperly picked by the system operator. Under this ordering, agent $2$ can only receive the greedy action of agent $1$ through a $3$-hop path through agents $6$ and $4$, since there is not a direct communication link between agent $1$ and agent $2$. Following this logic, the greedy algorithm will complete at time $\gt = 3 + 5 + 4 + 3 + 2 = 17$, which can be seen to be significantly higher than the previous well chosen order.

\begin{figure}[ht]
    \centering
    \includegraphics[width=\columnwidth]{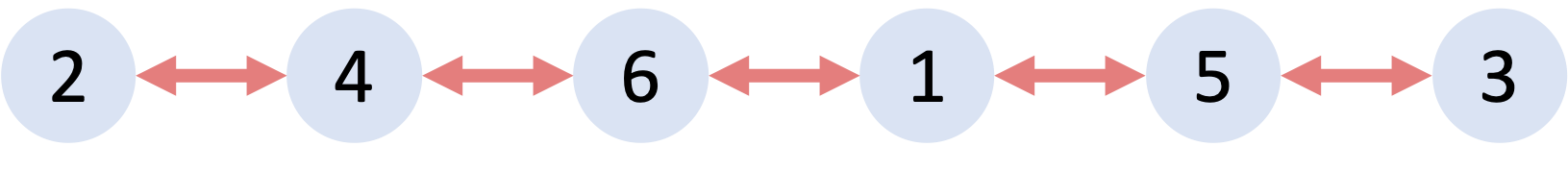}
    \caption{Example of a line graph, but instead we consider the adversarial ordering $\piw$ in which the vertices are labeled intermittently. In this instance, the $k$-hop communication path must bounce back and forth between agents to complete the greedy algorithm.}
    \label{fig:simcom2}
\end{figure}

Extending this argument to $n$ agents, under a line graph, the greedy algorithm under the best ordering $\pib$ will complete in $T(\gr,\pib) = n-1$ steps and under the worst ordering $\piw$ will complete in $T(\gr,\piw) = \floor{n^2/2}-1$ steps, where $\floor{a}$ is the floor function. Therefore, there may be a significant gap in the communication complexity that results from choosing different orderings. We analyze the possible gap by characterizing the quantities $\gtw$ and $\gtb$ in this paper.

\subsection{Communication Run-time Characterizations}
\label{subsec:worstbest}

We outline the main theorem of the paper below, where the worst case communication time over any graph structure is given for the best and worst orderings. The corresponding graph structures and orders that attain the worst-case communication time are displayed in Figure \ref{fig:simcom2} and Figure \ref{fig:starcom}.

\begin{figure}[ht]
    \centering
    \includegraphics[width=100pt]{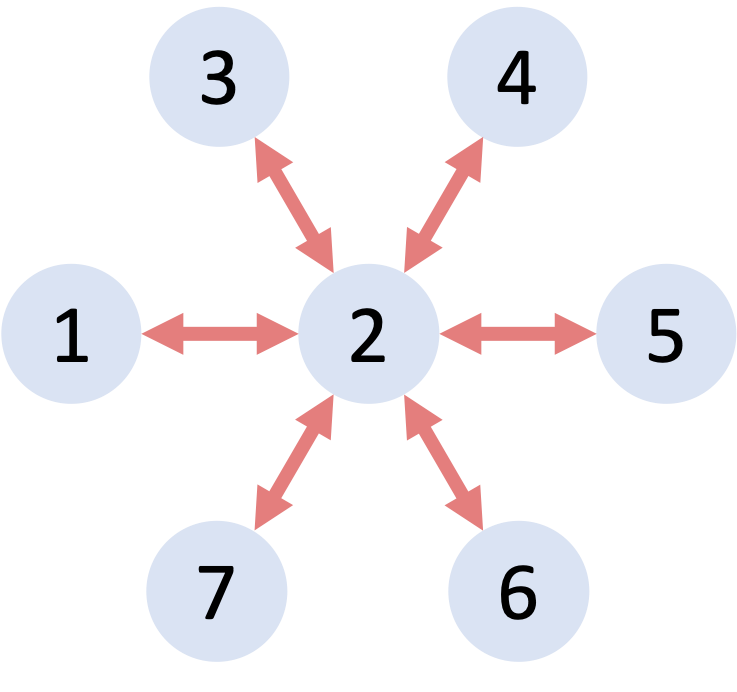}
    \caption{A $7$ node star graph with agent $2$ in the center. We note that the communication time $\gt$ on this graph using any ordering must be greater than $2 \cdot 7 - 4 = 10$. For $n \geq 3$ agents, the $n$-node star graph is the worst case graph for the best ordering $\pib$.}
    \label{fig:starcom}
\end{figure}

\begin{thm}
\label{thm:main}
Let $n \geq 3$ be the number of agents. The  maximum communication time required to complete the greedy algorithm in Eq. \eqref{eq:choice_gre} for any undirected, connected communication graph $\gr$ with the best and worst orderings is equal to
\begin{align}
    \max_{\gr} \ \gtb &=  2n - 4 \label{eq:gtworstb}\\
    \max_{\gr} \ \gtw &= \floor{n^2/2} - 1 \label{eq:gtworstw},
\end{align}
where $\gtb$ is defined in Eq. \eqref{eq:btord} and $\gtw$ is defined in Eq. \eqref{eq:wtord} and $\floor{a}$ is the largest integer that is below $a$.
\end{thm}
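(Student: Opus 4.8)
My plan is to first rewrite the objective in a tractable form. Since $\min_{\pathf}\card{\pathf}-1$ is exactly the graph distance $d_{\gr}(u,v)$ between the vertices labelled $i$ and $i+1$, Eq.~\eqref{eq:gtexp} becomes $\gt = \sum_{i=1}^{n-1} d_{\gr}(\pi^{-1}(i),\pi^{-1}(i+1))$, so that $\gtb$ and $\gtw$ are the minimum- and maximum-weight Hamiltonian paths through the $n$ vertices under the shortest-path metric. The single most useful structural fact is monotonicity: deleting edges can only increase distances, so for any spanning tree $T$ of $\gr$ we have $d_{\gr}(u,v)\le d_T(u,v)$ for all $u,v$, hence $T(\gr,\pi)\le T(T,\pi)$ termwise, and taking the appropriate extreme over $\pi$ gives $\gtb(\gr)\le\gtb(T)$ and $\gtw(\gr)\le\gtw(T)$. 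This reduces both upper bounds to trees, while the matching lower bounds come from two explicit extremal graphs (the star and the line).

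For \eqref{eq:gtworstb} I would show that on a tree the best ordering realizes a minimum spanning walk, which in a tree traverses every edge twice except those on one longest path, giving $\gtb(T)=2(n-1)-\mathrm{diam}(T)$. Since every tree on $n\ge 3$ vertices has $\mathrm{diam}(T)\ge 2$, this yields $\gtb(\gr)\le\gtb(T)\le 2n-4$. The lower bound is attained by the star: any ordering places the centre in at most two consecutive slots (each contributing distance $1$) while the remaining $n-3$ adjacencies are leaf--leaf pairs at distance $2$, so every ordering costs at least $2\cdot 1+2(n-3)=2n-4$, i.e.\ $\gtb(\text{star})=2n-4$. Combining the two halves gives $\max_{\gr}\gtb=2n-4$.

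For \eqref{eq:gtworstw} the upper bound is the crux. Fix any ordering with vertex sequence $v_1,\dots,v_n$ and any vertex $r$; the triangle inequality gives $d_{\gr}(v_i,v_{i+1})\le d_{\gr}(v_i,r)+d_{\gr}(r,v_{i+1})$, and summing over $i$ regroups (each interior vertex counted twice, the endpoints once) to
\begin{equation*}
\gt(\pi)\le 2\,\mathrm{Tr}(r)-d_{\gr}(v_1,r)-d_{\gr}(v_n,r),
\end{equation*}
where $\mathrm{Tr}(r)=\sum_{v}d_{\gr}(v,r)$. Choosing $r$ to be a median and noting that $v_1\ne v_n$ forces $d_{\gr}(v_1,r)+d_{\gr}(v_n,r)\ge 1$, I obtain $\gtw(\gr)\le 2\min_r\mathrm{Tr}(r)-1$. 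It then remains to prove the auxiliary bound $\min_r\mathrm{Tr}(r)\le\floor{n^2/4}$ for every connected $\gr$: using monotonicity I reduce to a tree and evaluate $\mathrm{Tr}$ at a centroid $c$, whose branches all have size at most $\floor{n/2}$; bounding each branch's contribution by that of a path ($1+2+\cdots$) and maximizing the resulting separable convex expression under the size constraints pins the optimum at two branches near $\floor{n/2}$, equal to $\floor{n^2/4}$. Substituting back gives $\gtw(\gr)\le 2\floor{n^2/4}-1=\floor{n^2/2}-1$, and the matching lower bound is the line graph with the interleaved ordering of the motivating example, whose consecutive distances sum to exactly $\floor{n^2/2}-1$.

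The main obstacle is obtaining the \emph{exact} worst-ordering constant, including the $-1$: the naive edge-cut bound $\gtw(T)\le\sum_{e}2\min(a_e,n-a_e)$ overshoots by one, and the saving comes precisely from the endpoint term $d_{\gr}(v_1,r)+d_{\gr}(v_n,r)\ge 1$ in the median argument. The second delicate point is the extremal lemma $\min_r\mathrm{Tr}(r)\le\floor{n^2/4}$, where the centroid property and the fact that a path maximizes the sum of depths from its root must be combined carefully; checking the small boundary cases ($n=3,4$) and that both the line and the cycle meet the bound confirms that the constant is tight.
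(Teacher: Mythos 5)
Your proposal is correct, and the two halves differ in how closely they track the paper. For the best-ordering bound \eqref{eq:gtworstb} you follow essentially the paper's route: reduce to spanning trees by distance monotonicity, identify $\gtb$ with the minimum spanning walk, evaluate it as $2(n-1)-\mathrm{diam}$, and match with the star (your direct $2+2(n-3)$ count on the star is a minor simplification of the paper's walk-based argument). For the worst-ordering bound \eqref{eq:gtworstw}, however, your argument is genuinely different. The paper first shows by an edge-replacement surgery (iterated over every vertex of degree $\geq 3$) that among trees the line graph maximizes $\gtw$, and then invokes an external result for the worst Hamiltonian ordering of the line graph to get $\lfloor n^2/2\rfloor-1$. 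You instead bound $\gt(\pi)\le 2\,\mathrm{Tr}(r)-d(v_1,r)-d(v_n,r)$ by the triangle inequality through an arbitrary root, harvest the $-1$ from $v_1\neq v_n$, and prove the extremal lemma $\min_r\mathrm{Tr}(r)\le\lfloor n^2/4\rfloor$ via the centroid (all branches of size $\le\lfloor n/2\rfloor$), the domination of branch depth-sums by $1+2+\cdots+k$, and convexity of $k\mapsto k(k+1)/2$; I checked that $2\lfloor n^2/4\rfloor-1=\lfloor n^2/2\rfloor-1$ in both parities, so the constants close exactly. Your route buys a fully self-contained proof that avoids both the somewhat informal inductive surgery step and the citation, and it directly bounds \emph{every} graph rather than only the reduced line-graph case; the paper's route, in exchange, yields the stronger structural statement that the line graph with the interleaved labeling is the unique extremal configuration among trees. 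The one item you should still write out explicitly is the routine computation that the interleaved ordering on the $n$-vertex line graph sums to exactly $\lfloor n^2/2\rfloor-1$, since your upper bound needs that matching example to conclude equality.
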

\begin{proof}
Proof is found in the Appendix.
\end{proof}

\subsection{Distributed Orderings that are Near-Optimal}
\label{subsec:disdes}

According to Theorem \ref{thm:main}, there is a significant complexity gap between using the best ordering $\pib$ and worst ordering $\piw$ for the communication time. However, finding the best order $\pib$ in general is not practical either due to computational restrictions or lack of information about the graph. So we would like to be able compute orderings that get as close to the run-time with $\pib$ as possible in a feasible manner. Therefore in this section, we construct an algorithm that can quickly compute a good ordering in conjunction with executing the greedy algorithm. An outline of the proposed algorithm is in Algorithm \ref{alg:distopt} with a fully distributed implementation found in \cite{konda2021}. The proposed design in essence computes a spanning walk on the graph that is close to the length of the minimum spanning walk through a variant of a depth-first search algorithm. 

The distributed implementation of Algorithm \ref{alg:distopt} to compute an approximate solution to Eq. \eqref{eq:opt} offers significant benefits over other distributed approaches. The communication scheme is simple, which allows for linear-time guarantees. This also means that the message complexity is low, where the bulk of the message is comprised of the previous agent's actions and the communication is robust to time delays. Lastly, since the base of Algorithm \ref{alg:distopt} is the greedy algorithm, we also inherit the corresponding performance guarantees. To be able to run Algorithm \ref{alg:distopt}, we assume that each agent (vertex) can store and access the following variables.
\begin{itemize}
    \item $v\mathrm{.actions} = \varnothing$ is the set of greedy actions that $v$ knows.
    \item $v\mathrm{.order} = \varnothing$ is the index in $\p$ that $v$ is labeled with.
    \item $v\mathrm{.parent} = \varnothing$ is $v$'s parent in the depth first search.
    \item $v\mathrm{.neighborhood}$ is the neighborhood set of $v$.
\end{itemize}

We also assume that a seed $v_{\mathrm{seed}}$ is given as the starting point of the Algorithm \ref{alg:distopt}. The communication time of Algorithm \ref{alg:distopt} is equivalent to the total number of calls to \textproc{message}, where the vertex $v$ messages either a vertex that hasn't been visited or its parent $v\mathrm{.parent}$. We keep track of the communication time through the variable $t$. The communication time guarantees of Algorithm \ref{alg:distopt} is given below.

\begin{prop}
    Let $n$ be the number of agents and $\talg$ be the output of Algorithm \ref{alg:distopt} given a communication graph $\gr$ and a seed vertex $v_{\mathrm{seed}} \in \vr$. The maximum communication time for any undirected, connected $\gr$ and seed $v_{\mathrm{seed}}$ is
    \begin{equation}
    \label{eq:algtime}
        \max_{\gr, v_{\mathrm{seed}}} \talg = 2n - 2
    \end{equation}
\end{prop}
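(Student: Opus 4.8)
The plan is to identify $\talg$ with the length of the spanning walk that Algorithm~\ref{alg:distopt} traces on $\gr$, measured as its number of edge-traversals (equivalently, the number of \textproc{message} calls). Because the algorithm is a depth-first search seeded at $v_{\mathrm{seed}}$, its first-visit edges form a spanning tree $\T$ of $\gr$ with exactly $n-1$ edges. Every \textproc{message} call traverses one edge of $\T$ and is of exactly one of two types: a \emph{downward} call, in which an already-visited vertex $v$ hands control to a not-yet-visited neighbor (which then receives its label $v\mathrm{.order}$), or an \emph{upward} call, in which $v$ returns control to $v\mathrm{.parent}$. I would first make this dichotomy precise and count the two types separately.

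For the upper bound, I would argue that there are exactly $n-1$ downward calls — one per vertex other than the seed, since each non-seed vertex is discovered exactly once — and at most $n-1$ upward calls, because a vertex backtracks to its parent only once, namely after all of its neighbors have been visited, so each of the $n-1$ tree edges is traversed upward at most a single time. Summing gives $\talg \le (n-1)+(n-1) = 2n-2$ for every connected $\gr$ and every seed, which establishes $\max_{\gr,\,v_{\mathrm{seed}}} \talg \le 2n-2$.

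For tightness I would exhibit a single graph--seed pair that forces all $n-1$ backtracks to occur, so that every tree edge is traversed exactly twice. A convenient witness is the star graph with $v_{\mathrm{seed}}$ at its center (Figure~\ref{fig:starcom}): each leaf, once visited, has no unvisited neighbor, so control returns to the center before the next leaf is explored, and after the final leaf the algorithm must still return to the center to certify completion; tracing this run yields exactly $n-1$ downward and $n-1$ upward calls. (The line graph with the seed at an endpoint is an equally good witness.) Combining the two bounds gives the claimed equality.

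The delicate point — and the step I expect to be the main obstacle — is the accounting of the \emph{upward} calls, in particular the termination behavior at the end of the search. One must verify that no tree edge is backtracked more than once (for the upper bound) and, for tightness, that the worst-case run really does perform the last backtrack to the seed rather than halting at the deepest-discovered vertex; this hinges on exactly when $v_{\mathrm{seed}}$ is able to conclude that every agent has received the full action set. Making this termination condition explicit from the pseudocode of Algorithm~\ref{alg:distopt}, and checking it against the message-counting bookkeeping, is where the real work lies; the remaining combinatorial counting is routine.
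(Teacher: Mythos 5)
Your proposal is correct and follows essentially the same route as the paper's proof: both split the \textproc{message} calls into those toward an unvisited vertex (exactly $n-1$ of them) and those toward a parent (at most $n-1$), and both certify tightness with the star graph. The only cosmetic difference is that you bound the backtracking calls by counting tree edges while the paper bounds them by $m \leq \ell$ via the termination condition at $v_{\mathrm{seed}}$; these are the same count.
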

\begin{proof}
Consider an arbitrary graph $\gr$ with $n$ agents and a seed vertex $v_{\mathrm{seed}}$. Let $\ell$ be the number of calls to \textproc{message} where a vertex $v$ messages another unvisited vertex and $m$ be the number of calls to \textproc{message} where a vertex $v$ messages its parent. It can be seen that Algorithm \ref{alg:distopt} will eventually visit all the vertices in the graph, so $\ell$ must equal $n-1$. Additionally, since $v_{\mathrm{seed}}$ does not have a parent and Algorithm \ref{alg:distopt} terminates if $v_{\mathrm{seed}}$ does not send a message to an unvisited neighbor, $m \leq \ell = n - 1$. Therefore for any $\gr$ and $v_{\mathrm{seed}}$ the communication time is bounded above by $\talg = m + \ell = 2n - 2$. Furthermore, it can be seen that for the star graph with $n$ vertices, $m = \ell = n - 1$, and the equality in Eq. \eqref{eq:algtime} is shown.
\end{proof}

Thus the communication guarantees of Algorithm \ref{alg:distopt} is only off by a constant of $2$ from the optimal communication guarantee of $2n - 4$ from the best ordering $\pib$. We remark that this difference can be further reduced if the termination condition is changed from `$v\mathrm{.parent}$ is not empty' to `$v\mathrm{.order} = n$', where $n$ is the number of agents.

\begin{algorithm}
\caption{Distributed Near-Optimal Ordering}
\label{alg:distopt}
\begin{algorithmic}
\Require graph $\gr$ and a vertex $v_{\mathrm{seed}} \in \vr$
\State \hspace{-1.3em} \textbf{Output:} time $t$
\State initialize the time $t \gets 0$
\State \Call{init}{$v_{\mathrm{seed}}$, $\varnothing$, $\varnothing$, 1}
\State \Return $t$ 
\Procedure{init}{$v$, $v_{\mathrm{par}}$, $\alpha$, $i$}
  \State label $v$ as $\mathrm{done}$
  \State update $v\mathrm{.order} \gets i$ and $v\mathrm{.parent} \gets v_{\mathrm{par}}$
  \State let $v$ compute $\gi_i$ from Eq. \eqref{eq:choice_gre} given actions $\alpha$
  \State update $v\mathrm{.actions} \gets \alpha \cup \{\gi_i\}$
  \State \Call{message}{$v$, $v\mathrm{.actions}$}
\EndProcedure

\Procedure{message}{$v$, $\alpha$}
\State update $v\mathrm{.actions} \gets \alpha$
  \If{exists $w$ in $v\mathrm{.neighborhood}$ not labeled $\mathrm{done}$}
  \State increment $t \gets t + 1$
  \State \Call{init}{$w$, $v$, $\alpha$, $\card{\alpha}$+1}
  \ElsIf{$v\mathrm{.parent}$ is not empty}
  \State increment $t \gets t + 1$
  \State \Call{message}{$v\mathrm{.parent}$, $\alpha$}
  \EndIf
\EndProcedure  
\end{algorithmic}
\end{algorithm}

\subsection{Directed Communication Graphs}

In this section, we consider the communication time guarantees with respect to the more general class of connected, directed graphs using different orderings. Under the class of undirected graphs, there is a significant gap in the communication time guarantees for the best $\pi_{\mathrm{best}}$ and the worst $\piw$ orderings. Not surprisingly, when we relax to the optimization problem $\max_{\gr_{\mathrm{dir}}} T_{\max}(\gr_{\mathrm{dir}})$ over the class of directed graphs, the worst case guarantees also increase. However, when considering the optimization problem for the best ordering over directed graphs $\max_{\gr_{\mathrm{dir}}} T_{\min}(\gr_{\mathrm{dir}})$, we have that the worst case guarantees are also of quadratic order. Therefore in directed graphs, the gap between the performance guarantees under different orderings is relatively small.

\begin{figure}[ht]
    \centering
    \includegraphics[width=\columnwidth]{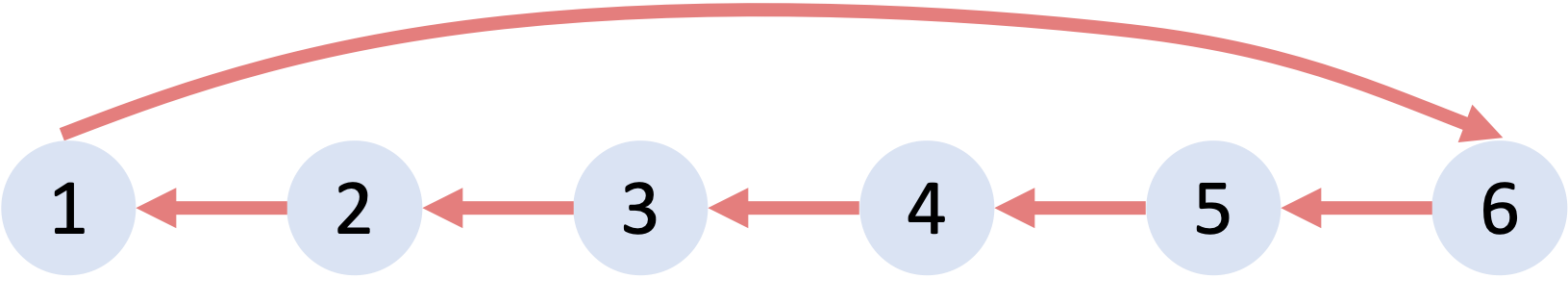}
    \caption{Example of a directed cycle graph, where the vertices are labeled adversarially with $\piw$. Here, the $k$-hop communication must cycle back to get to the next agent.}
    \label{fig:dircom1}
\end{figure}

For the graph example in Figure \ref{fig:dircom1} using the worst ordering $\piw$, notice that to get from $i$ to $i+1$, every edge but one in the directed graph must be traversed, resulting in a communication time of $5 \times 5$ for $6$ agents. For the graph example in Figure \ref{fig:dircom2} using any ordering $\pi$, the vertices labeled with $2$, $3$, and $4$ must be traversed every time to reach the vertices labeled with $5$, $6$, $7$, $8$ in order, starting from the vertex labeled $1$. Thus the communication time for the graph under the best ordering is $4 \times 4$ for $8$ agents. Extending these constructions to $n$ agents, we arrive at the following lemma.

\begin{figure}[ht]
    \centering
    \includegraphics[width=150pt]{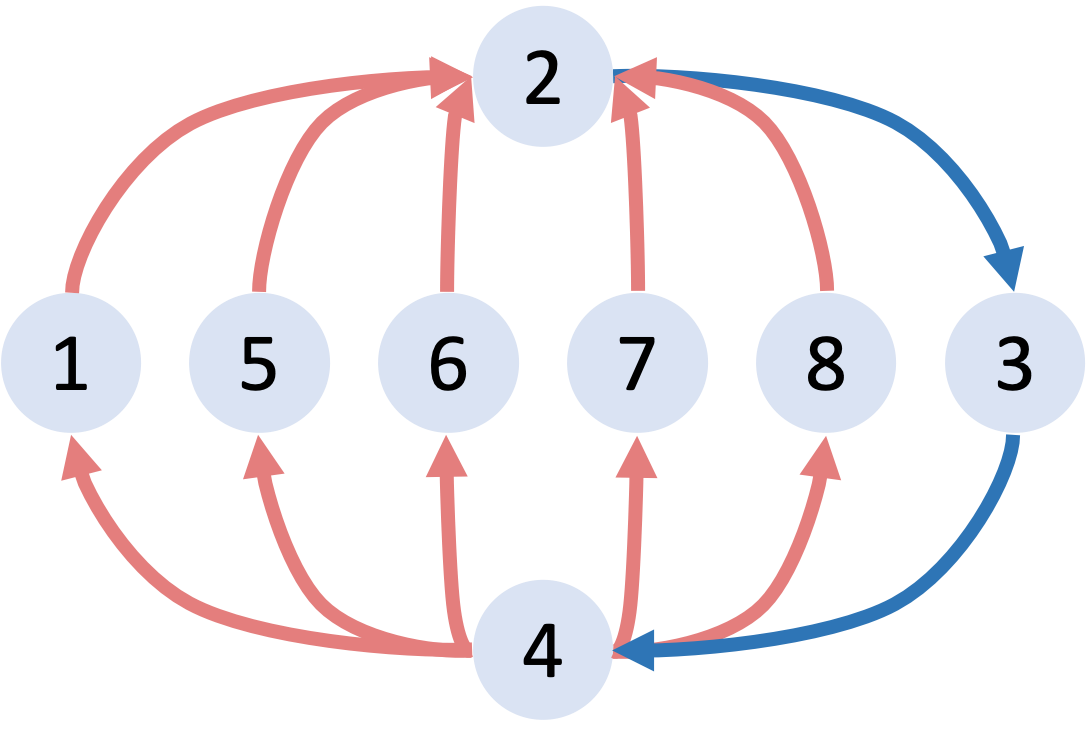}
    \caption{Example of a directed graph that has the worst communication complexity for the best ordering $\pib$. Here, the vertices in the latter half must cycle back to get to the next labeled vertex.}
    \label{fig:dircom2}
\end{figure}

\begin{prop}
\label{prop:worstdir}
Let $n \geq 3$ be the number of agents. The  maximum communication time required to complete the greedy algorithm in Eq. \eqref{eq:choice_gre} for any directed, connected communication graph $\gr_{\mathrm{dir}}$ with the best and worst ordering is
\begin{align}
    \max_{\gr_{\mathrm{dir}}} \ T_{\min}(\gr_{\mathrm{dir}}) &= \floor{\frac{n}{2}} \cdot \ceil{\frac{n}{2}}, \label{eq:gtworstbdir}\\
    \max_{\gr_{\mathrm{dir}}} \ T_{\max}(\gr_{\mathrm{dir}}) &= (n-1)^2 \label{eq:gtworstwdir},
\end{align}
where $\floor{a}$ is largest integer smaller than $a$ and $\ceil{a}$ is the smallest integer larger than $a$.
\end{prop}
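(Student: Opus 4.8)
The plan is to establish each equality by proving a matching \emph{universal} upper bound and an \emph{extremal} lower bound. Throughout, I will use the reformulation implicit in Eq.~\eqref{eq:gtexp}: writing $v_1, \dots, v_n$ for the vertices in the order induced by $\pi$ and $d(u,w)$ for the length (in edges) of a shortest directed path from $u$ to $w$, we have $\gt = \sum_{i=1}^{n-1} d(v_i, v_{i+1})$. Hence $T_{\min}(\gr_{\mathrm{dir}})$ and $T_{\max}(\gr_{\mathrm{dir}})$ are exactly the minimum- and maximum-weight Hamiltonian paths in the complete digraph on $\vr$ weighted by the shortest-path metric $d$, and $\max_{\gr_{\mathrm{dir}}}$ effectively ranges over strongly connected digraphs (so that $d$ is finite for every pair, which is what lets $T_{\max}$ be attained over all orderings).

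The worst-ordering equality~\eqref{eq:gtworstwdir} is the easy half. For the upper bound, every term $d(v_i, v_{i+1})$ is a shortest-path distance in a strongly connected digraph on $n$ vertices, hence at most $n-1$; summing the $n-1$ terms gives $T_{\max}(\gr_{\mathrm{dir}}) \le (n-1)^2$ for every graph. For the lower bound I will take the directed cycle of Figure~\ref{fig:dircom1} with the labeling $\piw$ in which each $v_{i+1}$ is the cyclic predecessor of $v_i$; then reaching $v_{i+1}$ from $v_i$ requires traversing all but one edge, so $d(v_i,v_{i+1}) = n-1$ for every $i$ and $T_{\max} = (n-1)^2$, matching the bound.

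For the best-ordering equality~\eqref{eq:gtworstbdir}, the lower bound will be furnished by the graph of Figure~\ref{fig:dircom2} generalized to $n$ vertices: a directed ``spine'' of $\floor{n/2}$ vertices together with $\ceil{n/2}$ ``outer'' vertices, arranged so that every outer vertex can be reached only by first traversing the full spine and every departure from an outer vertex returns to the spine's entry. The argument is a counting one: in any ordering, each of the $\ceil{n/2}$ outer vertices (save possibly the first vertex of the ordering) is entered by a transition that must cross the spine at cost $\floor{n/2}$, while the remaining transitions contribute at least one each; bookkeeping the two cases (whether the ordering starts inside or outside the outer set) will show every ordering costs at least $\floor{n/2}\ceil{n/2}$, whence $\max_{\gr_{\mathrm{dir}}} T_{\min}(\gr_{\mathrm{dir}}) \ge \floor{n/2}\ceil{n/2}$. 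Pinning the constant down \emph{exactly} (rather than up to lower-order terms) is where the construction must be chosen with care, and I expect this to need a short case analysis.

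The hard half, and the main obstacle, is the universal upper bound $T_{\min}(\gr_{\mathrm{dir}}) \le \floor{n/2}\ceil{n/2}$ for every strongly connected graph. The starting point is the identity $\floor{n/2}\ceil{n/2} = \sum_{s=1}^{n-1}\ceil{s/2}$, which suggests building a Hamiltonian path whose $s$-th transition costs at most $\ceil{s/2}$. I would grow a path from both ends at once, maintaining a front endpoint $f$, a back endpoint $b$, and a pool $R$ of unplaced vertices. The basic tool is a layer-counting lemma: in a strongly connected digraph a breadth-first search reaches at least one new vertex per hop until all are found, so for any $v$ and any set $S \not\ni v$ the nearest member satisfies $\min_{u \in S} d(v,u) \le n - |S|$, and dually $\min_{u \in S} d(u,v) \le n-|S|$. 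Applied naively this only yields the weak total $\binom{n}{2}$, so the crux is a halving refinement: running the outward BFS from $f$ and the inward BFS to $b$ simultaneously, both explored sets avoid $R$ and therefore lie in the placed set of size $s$, so once they are forced to overlap I obtain a vertex $w$ with $d(f,w)+d(w,b)$ small, which lets me reach $R$ from one side in at most $\ceil{s/2}$ hops, \emph{provided the two endpoints are far enough apart}. Summing $\ceil{s/2}$ over $s=1,\dots,n-1$ gives exactly $\floor{n/2}\ceil{n/2}$. The delicate point I expect to fight with is securing the halving at \emph{every} step: when $f$ and $b$ drift close together the two BFS frontiers need no longer be disjoint, and I will need an auxiliary argument — either choosing which end to extend so as to keep $d(f,b)$ large, or handling the small-$d(f,b)$ regime separately — to hold the per-step cost at $\ceil{s/2}$.
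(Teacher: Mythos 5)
Your treatment of Eq.~\eqref{eq:gtworstwdir} is complete and coincides with the paper's (per-transition bound $n-1$ summed over $n-1$ transitions, tightness on the reverse-labeled directed cycle), and your lower-bound construction for Eq.~\eqref{eq:gtworstbdir} is the same extremal graph the paper uses (its $\mathcal{D}_n$, with a spine of $\ceil{n/2}-1$ vertices feeding $\floor{n/2}+1$ outer vertices that all return to the spine's head). The problem is the universal upper bound $T_{\min}(\gr_{\mathrm{dir}}) \le \floor{n/2}\cdot\ceil{n/2}$, which is the substantive content of the proposition, and there your proposal has a genuine, self-acknowledged hole. The per-insertion bound of $\ceil{s/2}$ is only established when the forward BFS region from $f$ and the backward BFS region into $b$ are disjoint inside the placed set; when they overlap you only learn that $d(f,b)$ is small, which says nothing about how cheaply either endpoint can reach the unplaced pool $R$, and you offer no mechanism for keeping $f$ and $b$ apart or for recovering in the close regime. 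Worse, the target has no slack: on $\mathcal{D}_n$ the optimum equals $\sum_{s=1}^{n-1}\ceil{s/2}$ exactly, so your invariant must hold with equality at \emph{every} step, and the natural optimal orderings of $\mathcal{D}_n$ (e.g.\ the one the paper exhibits) have cost profile $1,\dots,1,\ceil{n/2},\dots,\ceil{n/2}$, which violates the per-step schedule outright --- a valid insertion order meeting $1,1,2,2,3,\dots$ does exist for small cases, but only for carefully chosen seeds and extension choices that your greedy rule does not guarantee. As written, the upper bound is a plan, not a proof.

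The paper closes this gap with a different and much shorter device: let $p_{\rm long}=(v_1,\dots,v_l)$ be a longest path of $\gr_{\mathrm{dir}}$. Since every shortest path is itself a path, \emph{every} pairwise distance is at most $l-1$. The ordering follows $p_{\rm long}$ up to some $v_J$ that has an out-neighbor $\bar v$ off the path (such a $J$ exists by strong connectivity when $l<n$), then places $\bar v$ and the remaining off-path vertices arbitrarily, then finishes with $v_{J+1},\dots,v_l$. This yields $l-1$ transitions of cost $1$ and at most $n-l$ transitions of cost at most $l-1$, so $T(\gr_{\mathrm{dir}},\hat\pi)\le(n-l+1)(l-1)\le\floor{n/2}\cdot\ceil{n/2}$, with the last step an elementary maximization over $l$. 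If you want to salvage your bidirectional-BFS route you would need to resolve the overlap case and the zero-slack tightness issue; otherwise I would replace that half of your argument with the longest-path bound, which needs no per-step invariant at all.
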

\begin{proof}
Proof is found in the Appendix.
\end{proof}

\section{Simulations}
\label{sec:sim}

We analyze our theoretical results for the communication time guarantees empirically through a simulation, presented in Figure \ref{fig:graph1} and Figure \ref{fig:graph2}. The code used to generate the simulations can be found in \cite{konda2021}. We use the model of Erdos-Renyi networks \cite{erdos1960evolution}, where each possible undirected pair of edges $(i, j)$ has a probability $P \leq 1$ of existing, to generate a sample set of possible graph structures. For Figure \ref{fig:graph1}, we sample $200$ instances of Erdos-Renyi networks with $6$ nodes and a probability parameter of $P = .3$. For each graph, we calculate the communication time for the greedy algorithm using the best ordering $\pib$, the ordering given by Algorithm \ref{alg:distopt}, and a randomly assigned ordering. For Figure \ref{fig:graph2}, we sample $300$ instances of Erdos-Renyi networks with $40$ nodes and a probability parameter of $P = .05$. In Figure \ref{fig:graph2}, we calculate the communication time for only the ordering given by Algorithm \ref{alg:distopt} and a randomly assigned ordering.

We observe in Figure \ref{fig:graph1} that indeed the best ordering $\pib$ achieves the lowest distribution of communication times, centered closely to $n=6$. The distribution of communication times of the ordering given in Algorithm \ref{alg:distopt} is noticeably close to the one of $\pib$, with indeed no run-times over $2n = 12$. The distribution of the communications using random orderings does perform the worst with the largest spread. In Figure \ref{fig:graph2}, we compare communication times from the ordering from Algorithm \ref{alg:distopt} directly with the random ordering, as computing the best ordering $\pib$ is infeasible for large $n$. We see the same trends in Figure \ref{fig:graph1} reflected in a more extreme fashion. The distribution of communication times using the ordering of Algorithm \ref{alg:distopt} is still upper bounded by $2n = 80$. However, the distribution times of communication times using the random ordering is now centered much higher with a larger spread as well. Therefore, we see significant benefits from using the ordering from Algorithm \ref{alg:distopt} rather than the naive approach of using random ordering.

\begin{figure}[ht]
    \centering
    \includegraphics[width=\columnwidth]{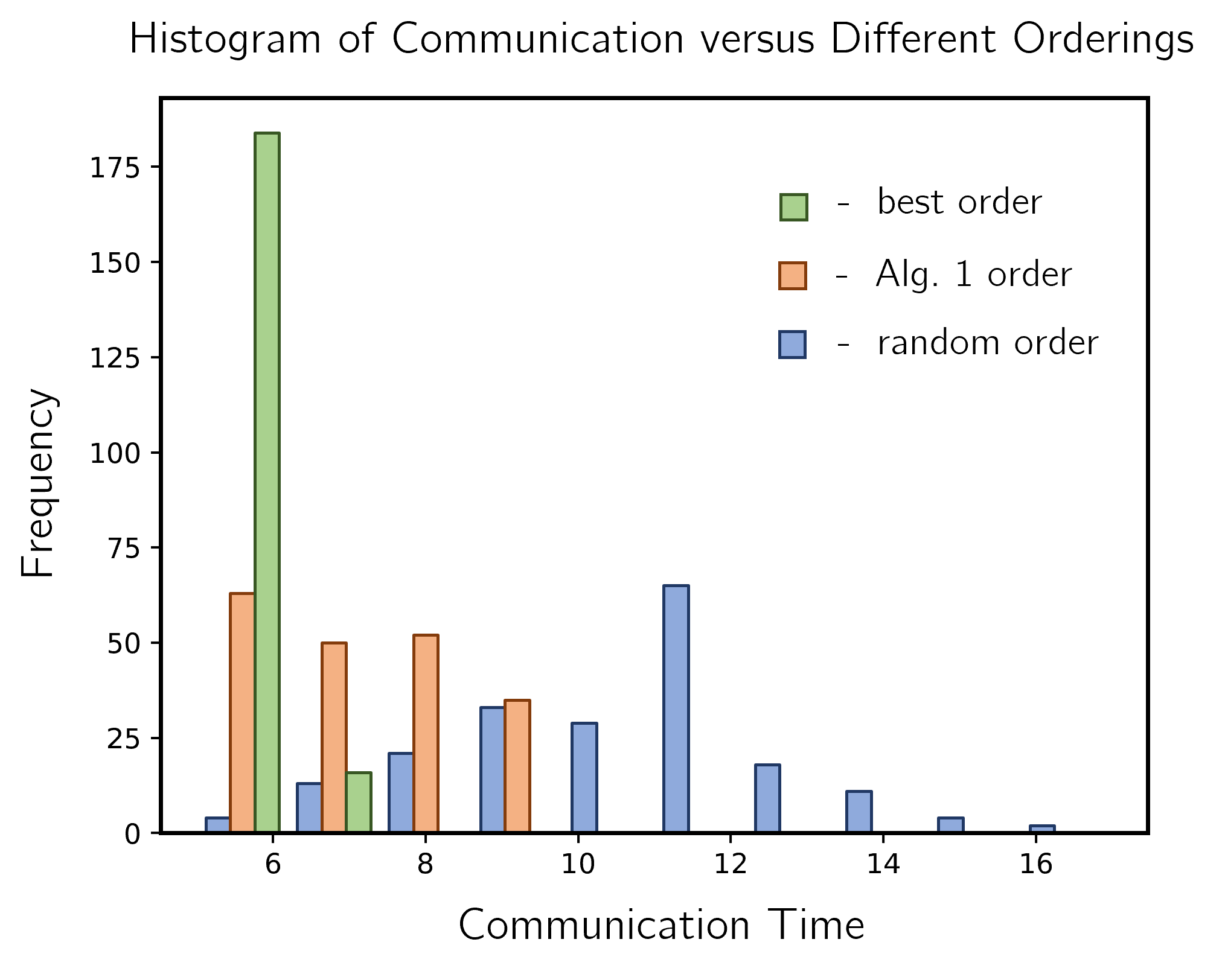}
    \caption{We show the distribution over of communication times needed to complete the greedy algorithm for $200$ instances of random graphs generated by a Erdos-Renyi process with respect to the random ordering, the best ordering $\pib$, and the ordering from Algorithm \ref{alg:distopt}. For the graph parameters $n=6$ number of agents and $P=.3$ probability of an edge existing, we see that $\pib$ gives slightly lower average communication times than the ordering from Algorithm \ref{alg:distopt}, but both offer significant improvements over the random ordering.}
    \label{fig:graph1}
\end{figure}

\begin{figure}[ht]
    \centering
    \includegraphics[width=\columnwidth]{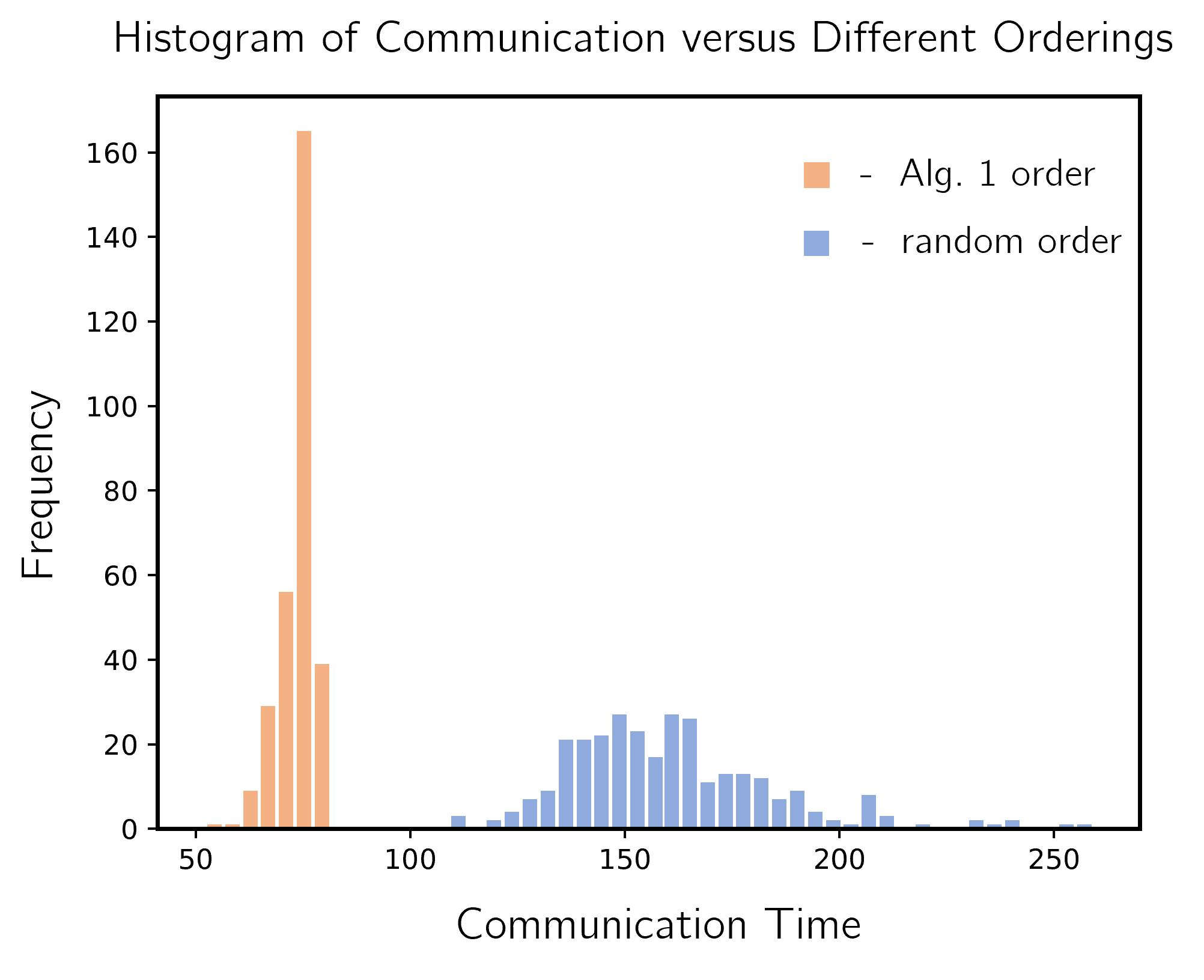}
    \caption{We compare the communication times under the random ordering to the times under the ordering algorithm proposed in Algorithm \ref{alg:distopt}. For the graph parameters $n=40$ number of agents and $P=.05$ probability of an edge existing, we see a marked decrease in the communication time from the random ordering to using the proposed algorithm over a set of $300$ randomly generated graphs.}
    \label{fig:graph2}
\end{figure}

\section{Submodular Maximization}
\label{sec:AA}

In this section, we discuss submodular maximization problems, which can be modeled as multiagent decision problems. Consider a base set of elements $E$, and let $a_i \subseteq E$, $\ac_i \subseteq 2^E$, and $\emp_i = \emptyset$. The objective function takes the form $W(a) = f\left(\cup_{a_i \in a} a_i\right)$, where $f:2^E \to  \mathbb{R}$ has the following properties for any $A \subseteq B \subseteq E$:
\begin{enumerate}
    \item \emph{Submodular}: $f(A \cup \{x\}) - f(A) \ge f(B \cup \{x\}) - f(B)$ for all $x \in E \setminus B$
    \item \emph{Monotone}: $f(A) \le f(B)$
    \item \emph{Normalized}: $f(\emptyset) = 0$
\end{enumerate}
In this setting, it has been shown that the greedy algorithm that is implemented in Algorithm \ref{alg:distopt} guarantees that $W(\gi) \ge (1/2) W(\aopt)$, where $\gi$ is defined in Eq. \eqref{eq:choice_gre}. 

\subsection{Comparison with Other Distributed Algorithms}

As mentioned previously, much work has been done to develop other algorithms to solve submodular maximization. For instance, \cite{rezazadeh2021distributed} presents a similar distributed algorithm, using a multilinear extension, and a distributed pipage rounding technique. At each time step, each agent performs a calculation for each action based on a sample of $K$ actions drawn from a probability distribution. After $T$ time steps, the performance guarantee is $(1 - 1/e)(1 - (2d(\gr)n + n/2 + 1)(n/T))$ with probability at least $1 - 2nTe^{-K/(8T^2)}$. Thus, for high $T$ and $K = O(T^2)$, there is a high probability that the algorithm gives the $1-1/e$ guarantee. Using this information, the algorithm could provide a $1/2$ guarantee only for $T \ge 4.78(2d(\gr)n^2 + n^2/2 + 1)$, and only with high probability when $K = O(T^2)$.

The paper \cite{du2020jacobi} describes a Jacobi-style algorithm, where at each time step agent $i$ creates a strategy profile, i.e., a probability distribution across each of its actions. Then, it chooses $K$ of those values to share with its neighbors to propagate through the network. It was shown that the resulting decision set approaches being within $1/2$ the optimal as the number of iterations increases. It is only shown in the paper that the probability of achieving the $1/2$ guarantee is $1-O(1/T)$ rather than an explicit time expectation. However, the examples in the paper suggest that it may take $T \ge n^2$ or more time steps to realize this.

In another example, \cite{Robey2019a} presents the Constraint-Distributed Continuous Greedy (CDCG), a consensus-style algorithm, in which agent $i$ shares an $m$-vector with all its neighbors at each time step, where $m$ is the number of actions available to $i$. It is shown that the resulting decision set approaches being within $1-1/e$ of the optimal as the number of iterations $T$ increases. The error in the performance guarantee vanishes at a rate of $O(n^{5/2}/T)$, and therefore, it may require $T \ge n^{5/2}$ time steps in order to reach an acceptable error.

In each of the three methods listed above, each time step requires each agent do perform some calculation for each of its actions. The time requirement for each to reach an acceptable solution is expected to be greater than $2n-2$, which is the number of time steps it takes to complete Algorithm~\ref{alg:distopt}. This suggests that there is a tradeoff between performance guarantees and time complexity: Algorithm~\ref{alg:distopt} achieves the 1/2 guarantee quickly, but other algorithms converge to a solution within $1 - 1/e$, but more slowly \footnote{Although we do not present a rigorous analysis here, we assert that Algorithm~\ref{alg:distopt} also requires less information exchange at each time step. This will be a topic of future work.}.

\section{Conclusion}
\label{sec:conc}
In this work, we analyze the greedy algorithm in a multi-agent context. More specifically, when the agents have limited information about the other agents but can communicate with a subset of the other agents, we characterize the effects of using different agent orderings on the communication time over the network. First, we have shown that under the best ordering over any graph, the communication time is $2n-4$, where $n$ is the number of agents. When considering the worst ordering, we also show that this bound increases to $\floor{n^2/2}-1$ holding for any communication network. We also provide an algorithm that can be used in a distributed fashion to obtain a communication time guarantee of $2n-2$, which is a constant factor away from the best communication guarantee. Furthermore, we verify these results computationally in Section \ref{sec:sim} and describe the implications in the context of submodular optimization problems in Section \ref{sec:AA}. Future work is comprised of extending this work to analyze the interplay between agent orderings, the communication time guarantees, and possible performance guarantees.

\bibliographystyle{ieeetr}
\bibliography{references.bib}

\appendix

\begin{proof}[Proof of Theorem \ref{thm:main}]
We first show the equality in Eq. \eqref{eq:gtworstb}. For a given graph $\gr$ and the optimal ordering $\pib$, we claim that the communication time is equal to
\begin{equation}
    \label{eq:minspan}
    \gtb = \card{\gamma_{\min}},
\end{equation}
where $\gamma_{\min}$ is a minimum spanning walk of $\gr$. Note that for any given ordering $\pi$, the communication time for $\gt$ is given in Eq. \eqref{eq:gtexp}. Let $\pathf^*$ be the shortest path from $i$ to $i+1$. The walk $\gamma_{\mathrm{cat}} = p^*_{1 \to 2}p^*_{2 \to 3} \dots p^*_{n-1 \to n}$ is defined as the concatenation of the shortest paths from $1$ to $n$ with the duplicate vertices from $\pathf^*$ and $p_{i+1 \to i+2}^*$ removed. Then according to Eq. \eqref{eq:gtexp}, we have that $\gt = \card{\gamma_{\mathrm{cat}}}$. We note that since $\pi^{-1}(i)$ and $\pi^{-1}(i+1)$ are in $\pathf^*$, then $\gamma_{\mathrm{cat}}$ is a spanning walk and thus $\card{\gamma_{\mathrm{cat}}} \geq \card{\gamma_{\min}}$. Since the ordering $\pi$ was arbitrary and $\gt = \card{\gamma_{\mathrm{cat}}}$, we have that 
$\min_{\pi} \gt \geq \card{\gamma_{\min}}$.
This expression actually holds with equality, as $\pib$ can be taken as the order that the vertices first appear in $\gamma_{\min}$, matching Eq. \eqref{eq:minspan} and the claim is shown.

Notice that a spanning walk for the graph $\gr_1 = (\vr_1, \ed_1)$ is also a spanning walk for the graph $\gr_2 = (\vr_1, \ed_1 \cup \{e\})$ with the added edge $e$. Then the length of the minimal spanning walk $\card{\gamma^{1}_{\min}}$ for $\gr_1$ must be at least $\card{\gamma^{2}_{\min}}$ for $\gr_{2}$. Thus $T_{\min}(\gr_1) \geq T_{\min}(\gr_2)$, and to calculate $\max_{\gr} \gtb$, it is sufficient to restrict to the class of tree graphs, which is the class of connected graphs with the least number of edges.

Consider any spanning walk $\gamma$ on the tree graph $\gr_{\T} = (\vr_{\T}, \ed_{\T})$ starting at the vertex $v_1$ and ending at the vertex $v_n$. Let $\ed^p_{\T} \subset \ed_{\T}$ be the set of edges that belong to the unique path $p$ between $v_1$ and $v_n$. We claim that $\gamma$ must visit each edge $e \in \ed_{\T}$ at least once and every edge $e \in \ed_{\T} \setminus \ed_{\T}^p$ at least twice. If there exist an edge $\hat{e} \in \ed_{\T}$ such that $\hat{e} \notin \gamma$, then since $\gamma$ is a spanning walk, then the graph $\hat{\gr} = (\vr_{\T}, \ed_{\T} \setminus \{ \hat{e} \})$ must also be a connected graph. But this is a contradiction, since $\gr_{\T}$ is assumed to be a tree graph. If an edge $\bar{e} \in \ed_{\T} \setminus \ed_{\T}^p$ is removed from $\gr_{\T}$, there must be two nonempty components, one containing both $v_1$ and $v_n$ and another containing neither. Thus if $\bar{e}$ is only traversed once in the walk, it must hop from $v_1$ from the first component to the other component once. However, the spanning walk $\gamma$ cannot come back to the first component again, contradicting our definition of $v_n$ and the claim is shown.

Therefore for a tree graph $\gr_{\T}$ and any spanning walk $\gamma$, we have that $\card{\gamma} \geq 2 \card{\ed_{\T}} - \card{\ed^p_{\T}}$. Moreover there exists a spanning walk that has the length equal to $\card{\gamma} = 2 \card{\ed_{\T}} - \card{\ed^p_{\T}}$ in which the vertices that are not along the path $p$ from $v_1$ to $v_n$ are reached through a cycle that visits each edge not part of $p$ twice. Thus, the length of the minimum spanning walk can be written as the following optimization problem. Here, $\mathrm{diam}(\gr_{\T})$ is the diameter of the graph $\gr_{\T}$ and $\Gamma_{\gr_{\T}}$ is the set of spanning walks.

\begin{align*}
    \card{\gamma_{\min}} &= \min_{\gamma \in \Gamma_{\gr_{\T}}} \card{\gamma} \\
    &= 2 \card{\ed_{\T}} - \max_{v_1, v_n} \card{\ed^p_{\T}} = 2(n-1) - \mathrm{diam}(\gr_{\T}).
\end{align*}

For a given tree graph $\gr_{\T}$ with more than $n \geq 3$ vertices, the diameter must be greater than $\mathrm{diam}(\gr_{\T}) \geq 2$. Therefore, the length of the minimum spanning walk must be less than $2n - 4$ for any tree graph $\gr_{\T}$. Moreover, the star graph is the tree graph with a graph diameter of $2$, so $\max_{\gr} \gtb = 2n - 4$.

Now, we show the equality in Eq. \eqref{eq:gtworstw}. We claim that the connected, undirected graph that attains $\max_{\gr} \gtw$ is the line graph. We observe, similarly as before, that if a path $\pathf$ exists from $i$ to $i+1$ for the graph $\gr_1 = (\vr_1, \ed_1)$, then it must also exist for the graph $\gr_2 = (\vr_1, \ed_1 \cup \{e\})$ with the added edge $e$ for any $1 \leq i \leq n-1$. As the run-time in Eq. \eqref{eq:gtexp} is defined by the shortest path from $i$ to $i+1$, the run-time for $\gr_2$ is lower bounded by $T(\gr_1, \pi) \geq T(\gr_2, \pi)$. Thus, we can assume that worst-case graph is a tree graph without loss of generality. If $\gr_{\T}$ is a tree graph, then the path $\pathf$ from $\pi^{-1}(i)$ to $\pi^{-1}(i+1)$ is unique.

We now claim that for any tree graph $\gr_{\T}$ for some ordering $\pi$, there exists an ordering $\pi_L$ with the line graph that achieves at least the same run-time. If $\gr_{\T}$ is a tree graph that is not the line graph, there exists at least one vertex $v_{c}$ of $\gr_{\T}$ with degree $d \geq 3$. Let $\{v_j\}_{1 \leq j \leq d}$ be the vertices in the neighborhood of $v_c$. Also, let $\T_j \subset \gr_{\T}$ be the corresponding tree component containing $v_j$ that results from removing the edge $(v_j, v_c)$. Let $v_j^{\mathrm{d}} \in \T_j$ be the vertex farthest away from $v_j$ and $\mathrm{d}_j$ be the distance from $v_j$ to $v_j^{\mathrm{d}}$. Additionally let $H = \{(\pi^{-1}(i), \pi^{-1}(i+1))\}_{i < n} $ and
\begin{equation*}
H^j = \{(v, v') \in H : v \in \T_1, v' \in \T_j \text{ or } v \in \T_j, v' \in \T_1 \}.    
\end{equation*}
We also denote that $(v, v_c)$ and $(v_c, v)$ are included in $H^1$. Consider the modified graph $\gr^J_{\T}$, where the edge $(v_{j=1}, v_c)$ is replaced with $(v_{j=1}, v_{j=J}^{\mathrm{d}})$ for some $1 < J \leq d$. Then the communication time for the graph $\gr^J_{\T}$ is
\begin{align*}
     T(\gr^J_{\T}, \pi) &= \sum_{(v, v') \in H} p^{\gr^J_{\T}}_{v, v'} =  \sum_{j \geq 1} \sum_{(v, v') \in H^j } p^{\gr^J_{\T}}_{v, v'} \\
     &\geq \sum_{(v, v') \in H} p^{\gr_{\T}}_{v, v'} + \sum_{j > 1} \sum_{(v, v') \in H^j } \mathrm{d}_J - 2 \sum_{(v, v') \in H^J} \mathrm{d}_J \\
     &\geq \gt \text{ for some } J.
\end{align*}
Here, $p^{\gr^J_{\T}}_{v, v'}$ refers to the path from $v$ to $v'$ in $\gr^J_{\T}$. The first equality comes from rewriting Eq. \eqref{eq:gtexp} using $H^j$. The last inequality comes from the fact that the degree $d \geq 3$ and that 
\begin{equation*}
\max_{J > 1} \Big\{ \sum_{j > 1} \sum_{(v, v') \in H^j } \mathrm{d}_J - 2 \sum_{(v, v') \in H^J} \mathrm{d}_J \Big\} \geq 0.
\end{equation*}
By applying a similar argument inductively for every tree $\T_j$ for $j \leq d-2$ and for every vertex $v_c$ with degree more than $3$, we have the claim.

The worst ordering for the line graph and the corresponding guarantee of $\floor{n^2/2} - 1$ is given by the work in \cite[Theorem 8]{bulteau_et_al:LIPIcs.CPM.2021.11}.
We give a sketch of the proof for completeness. Let $v_i$ be the $i$'th vertex in the line graph. Consider any ordering $\pi$. If $\pi$ has any of the following properties, then there must exist another ordering $\pi'$ that has a larger communication time that results from swapping positions of some $\pi(v)$ and $\pi(v')$.
\begin{enumerate}
    \item For some $1 < i < n$, either $(\pi(v_1), \pi(v_i), \pi(v_{i+1}))$ or $(\pi(v_i), \pi(v_{i+1}), \pi(v_{n}))$ is monotonic.
    \item For some $i, j < n$, there are two pairs $(\pi(v_i), \pi(v_{i+1}))$ and $(\pi(v_j), \pi(v_{j+1}))$ that are separated by some threshold $1 \leq \ell \leq n$.
    \item There is a triple $(\pi(v_i), \pi(v_{i+1}), \pi(v_{i+2}))$ that is monotonic.
\end{enumerate}
If an ordering $\pi$ does not have any of the previous three  properties, then it must have all even indexed vertices $\{v_i\}_{i \ \mathrm{even}}$ below the threshold $\floor{n/2}$ and all odd indexed vertices $\{v_i\}_{i \ \mathrm{odd}}$ above the threshold with the middle vertices corresponding to the endpoints $1$ and $n$. An example of this configuration is shown in Figure \ref{fig:simcom2}. It can be seen that the line graph with this configuration has a $\gt = \floor{n^2/2} - 1$.
\end{proof}

\begin{proof}[Proof of Proposition \ref{prop:worstdir}]

First, extending the directed cycle graph $\mathcal{C}_n$ in Figure \ref{fig:dircom1} to $n$ agents with the given worst case labeling $\pi_{\rm{worst}}$ produces a lower bound 
\begin{equation*}
\max_{\gr_{\mathrm{dir}}} T_{\max}(\gr_{\mathrm{dir}}) \geq T(\mathcal{C}_n, \pi_{\rm{worst}}) = (n-1)^2.
\end{equation*}
The upper bound $\max_{\gr_{\mathrm{dir}}} T_{\max}(\gr_{\mathrm{dir}}) \leq (n-1)^2$ also holds, since for any graph $\gr$ and ordering $\pi$, the communication time must satisfy $\gt \leq \sum_{i=1}^{n-1}(n-1)$ as the length of any path in the graph cannot be greater than $n-1$.

Now we show that $\max_{\gr_{\mathrm{dir}}} T_{\min}(\gr_{\mathrm{dir}}) \geq \floor{\frac{n}{2}} \cdot \ceil{\frac{n}{2}}$ by extending the graph construction, denoted $\mathcal{D}_n$, in Figure \ref{fig:dircom2} to $n$ agents. Formally, the edge set of $\mathcal{D}_n$ includes $(v_j, v_{j+1})$ for every $1 \leq j \leq \ceil{n/2}-2$ as well as the edges $(v_{\ceil{n/2}-1}, v_j)$ and $(v_j, v_{1})$ for every $\ceil{n/2} \leq j \leq n$. We confirm that $T(\mathcal{D}_n, \pi_{\rm{best}}) = \floor{\frac{n}{2}} \cdot \ceil{\frac{n}{2}}$ for the best order. Consider any ordering $\pi$ and without loss of generality, assume that $\pi(v_j) < \pi(v_{j+1})$ for all $\ceil{n/2} \leq j \leq n$. Thus the communication time has to be lower bounded by
\begin{equation*}
    T(\mathcal{D}_n, \pi) \geq \sum_{j \geq \ceil{n/2}}^{n-1} \min_{p_{v_j \to v_{j+1}}} \card{p_{v_j \to v_{j+1}}} - 1 \geq \floor{\frac{n}{2}} \cdot \ceil{\frac{n}{2}},
\end{equation*}
as $(v_j, v_1, v_2, \dots, v_{\ceil{n/2}-1}, v_{j+1})$ is the unique path $p_{v_j \to v_{j+1}}$ from $v_j$ to $v_{j+1}$ with a length of $\ceil{n/2} + 1$. Observe that the ordering $\pi(v_j) = j + 1$ for $1 \leq j \leq n-1$ and $\pi(v_n) = 1$ achieves this communication time and so the lower bound is indeed tight.

Now we show that for any given directed graph $\gr$, there exists an ordering $\hat{\pi}$ in which $T(\gr, \hat{\pi}) \leq \floor{\frac{n}{2}} \cdot \ceil{\frac{n}{2}}$. Let $p_{\rm{long}} = (v_1, \dots, v_l)$ be the longest path of the graph, where $l = \card{p_{\rm{long}}}$. If $l = n$, then $p_{\rm{long}}$ is a spanning walk on the graph and the ordering $\pi(v_j) = j$ along the longest path results in a communication time of $T(\gr, \hat{\pi}) = n-1 \leq \floor{\frac{n}{2}} \cdot \ceil{\frac{n}{2}}$. Otherwise, since $\gr$ is assumed to be strongly connected there exists a vertex $v_J$ in $p_{\rm{long}}$ that is adjacent to another vertex $\bar{v}$ that is not in the path $p_{\rm{long}}$. We construct the ordering $\hat{\pi}$ as follows. The vertices in $p_{\rm{long}}$ are labeled as $\hat{\pi}(v_j) = j$ for $j \leq J$ and $\hat{\pi}(v_j) = n - l + j$ for $j > J$. The vertex $\bar{v}$ is labeled with $\hat{\pi}(\bar{v}) = J + 1 $ and the labels for the rest of the vertices in $\gr$ can be arbitrarily selected from $\{J+2, \dots, n - l + J \}$. The resulting communication time along this order, using Eq. \eqref{eq:algtime}, is
\begin{align*}
    T(\gr, \hat{\pi}) &= \sum_{i = 1}^{n-1} \Big( \min_{\pathf} \card{\pathf} - 1 \Big) \\
    &= (l - 1) + \sum_{i = J+1}^{n-l+J} \Big( \min_{\pathf} \card{\pathf} - 1 \Big),
\end{align*}
as the vertices $\hat{\pi}^{-1}(i)$ and $\hat{\pi}^{-1}(i+1)$ are adjacent to each other if $1 \leq i \leq J$ or $n - l + J + 1 \leq i \leq n - 1$ according to the prescribed order $\hat{\pi}$. For any $i \in \p$, we observe that $\min_{\pathf} \card{\pathf} \leq l$ by definition of $p_{\rm{long}}$. Now we have the upper bound
\begin{equation*}
    T(\gr, \hat{\pi}) \leq (l - 1) + (n - l)(l - 1) \leq \floor{\frac{n}{2}} \cdot \ceil{\frac{n}{2}},
\end{equation*}
where the expression achieves the maximum at $l = \ceil{n/2} + 1$. Thus, since $\gr$ was arbitrary chosen,
\begin{equation*}
    \max_{\gr_{\mathrm{dir}}} T_{\min}(\gr_{\mathrm{dir}}) = \max_{\gr_{\mathrm{dir}}} T(\gr, \hat{\pi}) \leq \floor{\frac{n}{2}} \cdot \ceil{\frac{n}{2}},
\end{equation*}
and thus we have shown equality.
\end{proof}

\end{document}